\documentclass[twocolumns]{article}

\usepackage{amsmath,amssymb}
\usepackage[standard]{ntheorem}
\usepackage{graphicx} 
\usepackage{color}
\usepackage{dsfont}
\usepackage{graphicx}
\usepackage{subfigure}
\usepackage{stmaryrd}
\usepackage{color}
\usepackage{url}


\begin{document}

\title{Steganography: a class of secure and robust algorithms}

\author{Jacques M. Bahi, Jean-Fran\c{c}ois Couchot, and Christophe Guyeux\thanks{Authors in alphabetic order}\\University of Franche-Comt\'{e}, Computer Science Laboratory, Belfort, France\\ \{jacques.bahi, jean-francois.couchot, christophe.guyeux\}@univ-fcomte.fr}



\newcommand{\Nats}[0]{\ensuremath{\mathbb{N}}}
\newcommand{\Z}[0]{\ensuremath{\mathbb{Z}}}
\newcommand{\R}[0]{\ensuremath{\mathbb{R}}}
\newcommand{\Bool}[0]{\ensuremath{\mathds{B}}}
\newcommand{\StratSet}[0]{\ensuremath{\mathbb{S}}}

\maketitle

\begin{abstract}
This research work presents a new class of non-blind information hiding
algorithms that are stego-secure and robust.
They are based on some finite domains iterations having the Devaney's 
topological chaos property.
Thanks to a complete formalization of the approach we prove   
security against watermark-only attacks
of a large class of steganographic algorithms.
Finally a complete study of robustness is given in frequency DWT and DCT domains.
\end{abstract}

\section{Introduction}\label{sec:intro}

This work focus on non-blind  binary information hiding chaotic schemes: 
the  original host  is  required  to  extract the  binary hidden
information. This  context is indeed not
as restrictive as it could primarily appear.
Firstly, it allows to prove 
the authenticity of a document sent  through the  Internet 
(the original document is stored whereas the stego content is sent). 
Secondly, Alice and Bob can establish an hidden channel into a
streaming video 
(Alice and Bob both have the same movie, and  Alice hide information into 
the frame number $k$  iff the binary digit
number $k$ of its hidden message is  1).
Thirdly, based on a similar idea, a same
given image can be marked  several times by using various secret parameters
owned both by Alice and Bob. Thus more than one bit can be embedded into a given
image  by using  this work. Lastly,  non-blind watermarking  is
useful in  network's anonymity and intrusion  detection \cite{Houmansadr09}, and
to protect digital data sending through the Internet \cite{P1150442004}.
Furthermore,  enlarging the given payload  of a data  hiding scheme leads
clearly to  a degradation of its  security: the smallest the  number of embedded
bits is, the better the security is.

Chaos-based approaches are frequently proposed to improve
the quality of schemes in information 
hiding~\cite{Wu2007,Liu07,CongJQZ06,Zhu06}.
In these works, the understanding of chaotic systems
is almost intuitive: a kind of noise-like spread system
with sensitive dependence on initial condition.
Practically, some well-known chaotic maps are used
either in the data encryption stage~\cite{Liu07,CongJQZ06}, 
in the embedding into the carrier medium,
or in both~\cite{Wu2007,Wu2007bis}.
Methods referenced above are almost based on two
fundamental chaotic maps, namely the Chebychev and logistic maps, which range in $\mathbb{R}$. 
To avoid justifying that functions which are chaotic in $\mathbb{R}$
still remain chaotic in the computing representation (\textit{i.e.},
floating numbers) we argue that functions should be iterated on finite domains.
Boolean discrete-time dynamical systems (BS) are thus iterated.

Furthermore, previously referenced works often focus on  
discretion and/or robustness properties, but they do not consider security.  
As far as we know, stego-security~\cite{Cayre2008} and chaos-security 
have only been proven 
on the spread spectrum watermarking~\cite{Cox97securespread},
and on
the dhCI algorithm~\cite{gfb10:ip}, which is notably based on iterating  
the negation function.
We argue that other functions can provide algorithms as secure as the dhCI one.
This work generalizes thus this latter algorithm and formalizes all
its stages. Due to this formalization, we address the proofs of the two
security properties for a large class of steganography approaches.

This research work is organized as follows.
It firstly introduces the new class of algorithms (Sec.~\ref{sec:formalization}), 
 which is the first contribution.
Next, the Section~\ref{sec:security} presents a 
State-of-the-art  in information hiding security and shows how secure  
is our approach. The proof is the second contribution.
The chaos-security property is studied in Sec.~\ref{sec:chaossecurity}
and instances of algorithms guaranteeing that desired property are presented.
This is the fourth contribution. 
Applications in frequency domains (namely DWT and DCT embedding) 
are formalized and corresponding experiments are 
given in Sec.~\ref{sec:applications}. 
This shows the applicability of the whole approach.
Finally, conclusive remarks and perspectives are given in Sec.~\ref{sec:concl}.

\section{Information hiding algorithm: formalization}
\label{sec:formalization}

As far as we know, no result rules that the chaotic behavior of a function 
that has been established on $\mathbb{R}$ remains on the floating numbers.
As stated before, this work presents the alternative to iterate a Boolean map:
results that are theoretically obtained in that domain are preserved  
during implementations.
In this section, we first give some recalls on Boolean discrete
dynamical Systems (BS). 
With this material, next sections 
formalize the information hiding algorithms based on 
these Boolean iterations.

\subsection{Boolean discrete dynamical systems}\label{sub:bdds}

Let us denote by $\llbracket a ; b \rrbracket$ the interval of integers:
$\{a, a+1, \hdots, b\}$, where $a \leqslant b$.

Let $n$ be a positive integer. A Boolean discrete-time 
network is a discrete dynamical
system defined from a {\emph{Boolean map}}
$f:\Bool^n\to\Bool^n$ s.t. 
\[
  x=(x_1,\dots,x_n)\mapsto f(x)=(f_1(x),\dots,f_n(x)),
\]
{\emph{and an iteration scheme}}: parallel, serial,
asynchronous\ldots 
With the parallel iteration scheme, 
the dynamics of the system are described by $x^{t+1}=f(x^t)$
where $x^0 \in \Bool^n$.
Let thus $F_f: \llbracket1;n\rrbracket\times \Bool^{n}$ to $\Bool^n$ 
be defined by
\[
F_f(i,x)=(x_1,\dots,x_{i-1},f_i(x),x_{i+1},\dots,x_n),
\]
with the \emph{asynchronous} scheme,
the dynamics of the system are described by $x^{t+1}=F_f(s^t,x^t)$
where $x^0\in\Bool^n$ and $s$ is a {\emph{strategy}}, \textit{i.e.}, a sequence 
in $\llbracket1;n\rrbracket^\Nats$.
Notice that this scheme only modifies one element at each iteration.

Let $G_f$ be the map from $\mathcal{X}= \llbracket1;n\rrbracket^\Nats\times\Bool^n$ to 
itself s.t.
\[
G_f(s,x)=(\sigma(s),F_f(s^0,x)),
\] 
where $\sigma(s)^t=s^{t+1}$ for all $t$ in $\Nats$. 
Notice that parallel iteration of $G_f$ from an initial point
$X^0=(s,x^0)$ describes the ``same dynamics'' as the asynchronous
iteration of $f$ induced by the initial point $x^0$ and the strategy
$s$.

Finally, let $f$ be a map from $\Bool^n$ to itself. The
{\emph{asynchronous iteration graph}} associated with $f$ is the
directed graph $\Gamma(f)$ defined by: the set of vertices is
$\Bool^n$; for all $x\in\Bool^n$ and $i\in \llbracket1;n\rrbracket$,
$\Gamma(f)$ contains an arc from $x$ to $F_f(i,x)$.

We have already established~\cite{GuyeuxThese10} that we can define a
distance $d$ on $\mathcal{X}$ such that 
$G_f$ is a continuous and chaotic function according to 
Devaney~\cite{Devaney}.
The next section focus on  the coding step of  the steganographic algorithm
based on $G_f$ iterations. 

\subsection{Coding}\label{sub:wmcoding}
In what follows, $y$ always stands  
for a digital content we wish to hide into a digital host $x$.

The data hiding scheme presented here does not constrain media to have 
a constant size. It is indeed sufficient to provide a function and a strategy 
that may be  parametrized with the size of the elements to modify. 
The \emph{mode} and the \emph{strategy-adapter} defined below achieve 
this goal.  

\begin{definition}[Mode]
\label{def:mode}
A map $f$, which associates to any $n \in \mathds{N}$ an application 
$f_n : \mathds{B}^n \rightarrow \mathds{B}^n$, is called a \emph{mode}.
\end{definition}



For instance, the \emph{negation mode} is defined by the map that
assigns to every integer $n \in \mathds{N}^*$ the function 
$${\neg}_n:\mathds{B}^n \to \mathds{B}^n, 
(x_1, \hdots, x_n) \mapsto (\overline{x_1}, \hdots, \overline{x_n}).$$

\begin{definition}[Strategy-Adapter]
  \label{def:strategy-adapter}
  A \emph{strategy-adapter}\index{configuration} is a function $\mathcal{S}$ 
  from $\Nats$ to the set of integer sequences, 
  which associates to $n$ a sequence 
  $S \in  \llbracket 1, n\rrbracket^\mathds{N}$.
\end{definition}

Intuitively, a strategy-adapter aims at generating a strategy 
$(S^t)^{t \in \Nats}$ where each term $S^t$ belongs to 
$\llbracket 1, n \rrbracket$. Moreover it may be parametrized
in order to depend on digital media to embed. 

For instance, let us define the  \emph{Chaotic Iterations with Independent Strategy}
(\emph{CIIS}) strategy-adapter.
The CIIS strategy-adapter with parameters 
$(K,y,\alpha,l) \in [0,1]\times [0,1] \times ]0, 0.5[ \times \mathds{N}$
is the function that associates to any  $n \in \Nats$ the sequence
$(S^t)^{t \in \mathds{N}}$ defined by:

 \begin{itemize} 
 \item $K^0 = \textit{bin}(y) \oplus \textit{bin}(K)$: $K^0$ is the real number whose binary decomposition is equal to the bitwise exclusive or (xor)
   between the binary decompositions of $y$ and of  $K$;
 \item $\forall t \leqslant l, K^{t+1} = F(K^t,\alpha)$;
 \item $\forall t \leqslant l, S^t = \left \lfloor n \times K^t \right \rfloor + 1$;
 \item $\forall t > l, S^t = 0$.
 \end{itemize}
where $F$ is the
piecewise linear chaotic map~\cite{Shujun1}, 
recalled in what follows:

\begin{definition}[Piecewise linear chaotic map]
 \label{def:fonction chaotique linéaire par morceaux}
Let $\alpha \in ]0; 0.5[$ be a control parameter. 
The \emph{piecewise linear chaotic map} is the map $F$
defined by: 
 $$
 F(t,\alpha) = \left\{
 \begin{array}{cl} 
 \dfrac{t}{\alpha} & t \in [0; \alpha],\\ 
 \dfrac{t-\alpha}{\frac{1}{2}-\alpha} & t \in [\alpha;\frac{1}{2}],\\ 
 F(1-t,\alpha) & t \in [\frac{1}{2}; 1].\\ 
 \end{array}
 \right. 
 $$
 \end{definition}

Contrary to the logistic map, the use of this piecewise linear chaotic map 
is relevant in cryptographic usages \cite{Arroyo08}.

Parameters of CIIS strategy-adapter will be instantiate as follows: 
$K$ is the secret embedding key, $y$ is the secret message, 
$\alpha$ is the threshold of the piecewise linear chaotic map,
which can be set as $K$ or can act as a second secret key.
Lastly, $l$ is for the iteration number bound:
enlarging its value improve the chaotic behavior of the scheme,
but the time required to achieve the embedding grows too.

Another strategy-adapter is the 
\emph{Chaotic Iterations with Dependent Strategy} (CIDS) 
with parameters $(l,X) \in \mathds{N}\times \mathds{B}^\mathds{N}$, 
which is the function that maps any $ n \in \mathds{N}$ to
the sequence $\left(S^t\right)^{t \in \mathds{N}}$ defined by:
\begin{itemize}
\item $\forall t \leqslant l$, if $t \leqslant l$ and $X^t = 1$, 
  then $S^t=t$, else $S^t=1$;
\item $\forall t > l, S^t = 0$.
\end{itemize}

Let us notice that the terms of $x$ that may be replaced by terms taken
from $y$ are less important than other: they could be changed 
without be perceived as such. More generally, a 
\emph{signification function} 
attaches a weight to each term defining a digital media,
w.r.t. its position $t$:

\begin{definition}[Signification function]
A \emph{signification function} is a real sequence 
$(u^k)^{k \in \Nats}$. 
\end{definition}





For instance, let us consider a set of    
grayscale images stored into portable graymap format (P3-PGM):
each pixel ranges between 256 gray levels, \textit{i.e.},
is memorized with eight bits.
In that context, we consider 
$u^k = 8 - (k  \mod  8)$  to be the $k$-th term of a signification function 
$(u^k)^{k \in \Nats}$. 
Intuitively, in each group of eight bits (\textit{i.e.}, for each pixel) 
the first bit has an importance equal to 8, whereas the last bit has an
importance equal to 1. This is compliant with the idea that
changing the first bit affects more the image than changing the last one.

\begin{definition}[Significance of coefficients]\label{def:msc,lsc}
Let $(u^k)^{k \in \Nats}$ be a signification function, 
$m$ and $M$ be two reals s.t. $m < M$. Then 
the \emph{most significant coefficients (MSCs)} of $x$ is the finite 
  vector $u_M$, 
the \emph{least significant coefficients (LSCs)} of $x$ is the 
finite vector $u_m$, and 
the \emph{passive coefficients} of $x$ is the finite vector $u_p$ such that:
\begin{eqnarray*}
  u_M &=& \left( k ~ \big|~ k \in \mathds{N} \textrm{ and } u^k 
    \geqslant M \textrm{ and }  k \le \mid x \mid \right) \\
  u_m &=& \left( k ~ \big|~ k \in \mathds{N} \textrm{ and } u^k 
  \le m \textrm{ and }  k \le \mid x \mid \right) \\
   u_p &=& \left( k ~ \big|~ k \in \mathds{N} \textrm{ and } 
u^k \in ]m;M[ \textrm{ and }  k \le \mid x \mid \right)
\end{eqnarray*}
 \end{definition}

For a given host content $x$,
MSCs are then ranks of $x$  that describe the relevant part
of the image, whereas LSCs translate its less significant parts.
We are then ready to decompose an host $x$ into its coefficients and 
then to recompose it. Next definitions formalize these two steps. 

\begin{definition}[Decomposition function]
Let $(u^k)^{k \in \Nats}$ be a signification function, 
$\mathfrak{B}$ the set of finite binary sequences,
$\mathfrak{N}$ the set of finite integer sequences, 
$m$ and $M$ be two reals s.t. $m < M$.  
Any host $x$ can be decomposed into 
\[
(u_M,u_m,u_p,\phi_{M},\phi_{m},\phi_{p})
\in
\mathfrak{N} \times 
\mathfrak{N} \times 
\mathfrak{N} \times 
\mathfrak{B} \times 
\mathfrak{B} \times 
\mathfrak{B} 
\]
where
\begin{itemize}
\item $u_M$, $u_m$, and $u_p$ are coefficients defined in Definition  
\ref{def:msc,lsc};
\item $\phi_{M} = \left( x^{u^1_M}, x^{u^2_M}, \ldots,x^{u^{|u_M|}_M}\right)$;
 \item $\phi_{m} = \left( x^{u^1_m}, x^{u^2_m}, \ldots,x^{u^{|u_m|}_m} \right)$;
 \item $\phi_{p} =\left( x^{u^1_p}, x^{u^2_p}, \ldots,x^{u^{|u_p|}_p}\right) $.
 \end{itemize}
The function that associates the decomposed host to any digital host is 
the \emph{decomposition function}. It is 
further referred as $\textit{dec}(u,m,M)$ since it is parametrized by 
$u$, $m$, and $M$. Notice that $u$ is a shortcut for $(u^k)^{k \in \Nats}$.
\end{definition}

\begin{definition}[Recomposition]
Let 
$(u_M,u_m,u_p,\phi_{M},\phi_{m},\phi_{p}) \in 
\mathfrak{N} \times 
\mathfrak{N} \times 
\mathfrak{N} \times 
\mathfrak{B} \times 
\mathfrak{B} \times 
\mathfrak{B} 
$ s.t.
\begin{itemize}
\item the sets of elements in $u_M$, elements in $u_m$, and 
elements in $u_p$ are a partition of $\llbracket 1, n\rrbracket$;
\item $|u_M| = |\varphi_M|$, $|u_m| = |\varphi_m|$, and $|u_p| = |\varphi_p|$.  
\end{itemize}
One can associate the vector 
\[
x = 
\sum_{i=1}^{|u_M|} \varphi^i_M . e_{{u^i_M}} +  
\sum_{i=1}^{|u_m|} \varphi^i_m .e_{{u^i_m}} +  
\sum_{i=1}^{|u_p|} \varphi^i_p. e_{{u^i_p}} 
\]
\noindent where 
$(e_i)_{i \in \mathds{N}}$ is the usual basis of the $\mathds{R}-$vectorial space $\left(\mathds{R}^\mathds{N}, +, .\right)$ (that is to say, $e_i^j = \delta_{ij}$, where $\delta_{ij}$ is the Kronecker symbol).
The function that associates $x$ to any 
$(u_M,u_m,u_p,\phi_{M},\phi_{m},\phi_{p})$ following the above constraints 
is called the \emph{recomposition function}.
\end{definition}

The embedding consists in the replacement of the values of 
$\phi_{m}$ of $x$'s LSCs  by $y$. 
It then composes the two decomposition and
recomposition functions seen previously. More formally:

\begin{definition}[Embedding media]
Let $\textit{dec}(u,m,M)$ be a decomposition function,
$x$ be a host content,
$(u_M,u_m,u_p,\phi_{M},\phi_{m},\phi_{p})$ be its image by $\textit{dec}(u,m,M)$, 
and $y$ be a digital media of size $|u_m|$.
The digital media $z$ resulting on the embedding of $y$ into $x$ is 
%
the image of $(u_M,u_m,u_p,\phi_{M},y,\phi_{p})$
by  the recomposition function $\textit{rec}$.
\end{definition}

Let us then define the dhCI information hiding scheme
presented in~\cite{gfb10:ip}:

\begin{definition}[Data hiding dhCI]
 \label{def:dhCI}
Let $\textit{dec}(u,m,M)$ be a decomposition function,
$f$ be a mode, 
$\mathcal{S}$ be a strategy adapter,
$x$ be an host content,\linebreak
$(u_M,u_m,u_p,\phi_{M},\phi_{m},\phi_{p})$ 
be its image by $\textit{dec}(u,m,M)$,
$q$ be a positive natural number,  
and $y$ be a digital media of size $l=|u_m|$.

The dhCI dissimulation  maps any
$(x,y)$  to the digital media $z$ resulting on the embedding of
$\hat{y}$ into $x$, s.t.

\begin{itemize}
\item we instantiate the mode $f$ with parameter $l=|u_m|$, leading to 
  the function $f_{l}:\Bool^{l} \rightarrow \Bool^{l}$;
\item we instantiate the strategy adapter $\mathcal{S}$ 
with parameter $y$ (and possibly some other ones);
this instantiation leads to the strategy $S_y \in \llbracket 1;l\rrbracket ^{\Nats}$.

\item we iterate $G_{f_l}$ with initial configuration $(S_y,\phi_{m})$;
\item $\hat{y}$ is finally the $q$-th term of these iterations.
\end{itemize}
\end{definition}

To summarize, iterations are realized on the LSCs of the
host content
(the mode gives the iterate function,  
the strategy-adapter gives its strategy), 
and the last computed configuration is re-injected into the host content, 
in place of the former LSCs.

Notice that in order to preserve the unpredictable behavior of the system, 
the size of the digital medias is not fixed.
This approach is thus self adapted to any media, and more particularly to
any size of LSCs. 
However this flexibility enlarges the complexity of the presentation: 
we had to give Definitions~\ref{def:mode} and~\ref{def:strategy-adapter} 
respectively of mode and strategy adapter.

\begin{figure}[ht]
\centering
\includegraphics[width=8.5cm]{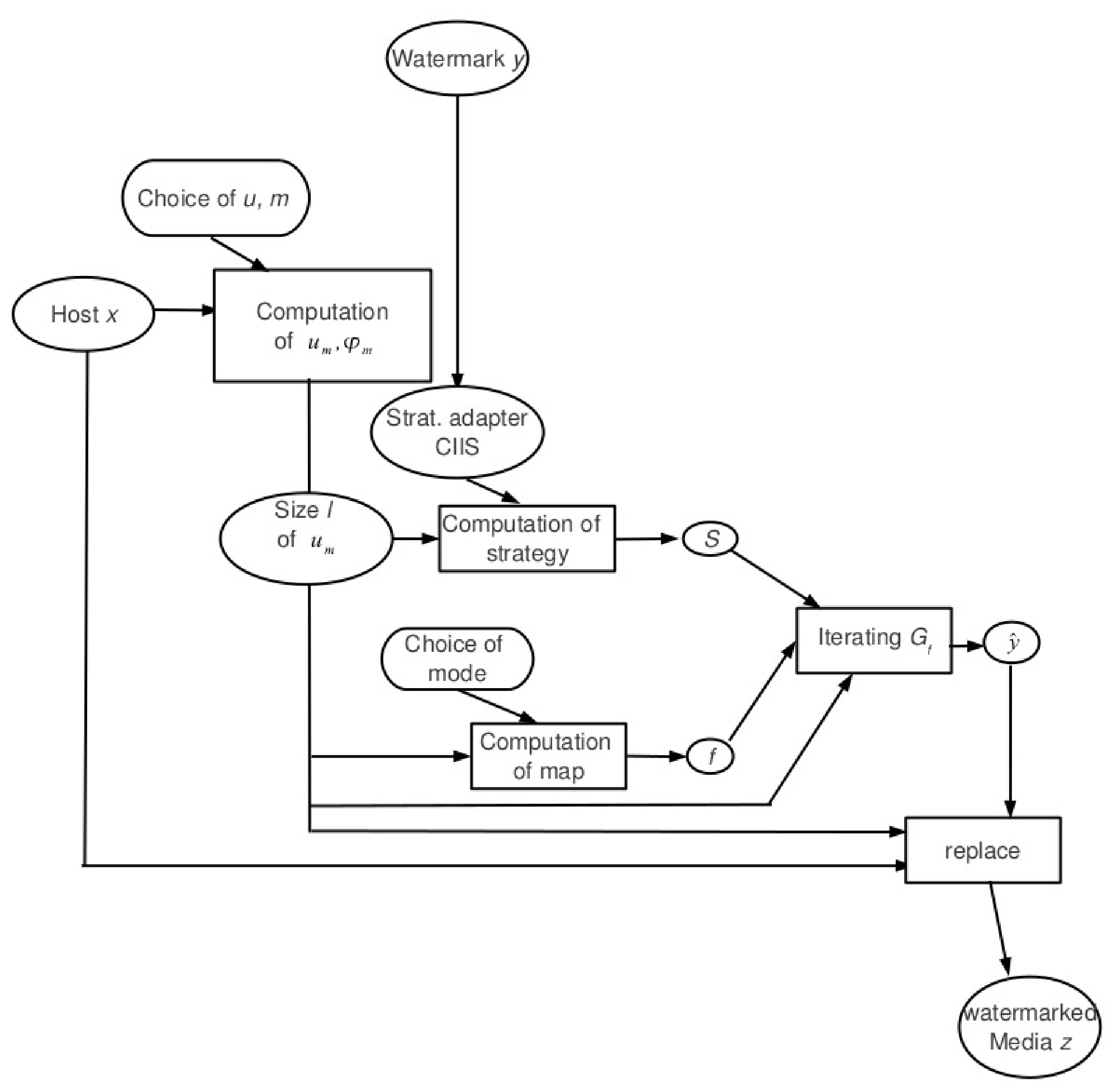}
\caption{The dhCI dissimulation scheme}
\label{fig:organigramme}
\end{figure}

Next section shows how to check whether a media contains a watermark.

\subsection{Decoding}\label{sub:wmdecoding}

Let us firstly show how to formally check whether a given digital media $z$ 
results from the dissimulation of $y$ into the digital media $x$.

\begin{definition}[Watermarked content]
Let $\textit{dec}(u,m,M)$ be a decomposition function,
$f$ be a mode, 
$\mathcal{S}$ be a strategy adapter, 
$q$ be a positive natural number,  
$y$ be a digital media, and 
$(u_M,u_m,u_p,\phi_{M},\phi_{m},\phi_{p})$ be the 
image by $\textit{dec}(u,m,M)$  of  a digital media $x$. 
Then $z$ is \emph{watermarked} with $y$ if
the image by $\textit{dec}(u,m,M)$ of $z$ is 
$(u_M,u_m,u_p,\phi_{M},\hat{y},\phi_{p})$, where 
$\hat{y}$ is the right member of $G_{f_l}^q(S_y,\phi_{m})$.
\end{definition}

Various decision strategies are obviously  possible to determine whether a given
image $z$ is watermarked or not, depending  on the eventuality
that the considered image may have  been attacked.
For example, a  similarity percentage between $x$
and  $z$ can  be  computed and compared  to a  given
threshold. Other  possibilities are the use of  ROC curves or
the definition of a null hypothesis problem.

The next section recalls some security properties and shows how the 
\emph{dhCI dissimulation} algorithm verifies them.

\section{Security analysis}\label{sec:security}
\subsection{State-of-the-art in information hiding security}\label{sub:art}

As far as we know, Cachin~\cite{Cachin2004}
produces the first fundamental work in information hiding security:
in the context of steganography, the attempt of an attacker to distinguish 
between an innocent image and a stego-content is viewed as an hypothesis 
testing problem.
Mittelholzer~\cite{Mittelholzer99} next proposed the first theoretical 
framework for analyzing the security of a watermarking scheme.
Clarification between  robustness and security 
and classifications of watermarking attacks
have been firstly presented by Kalker~\cite{Kalker2001}.
This work has been deepened by Furon \emph{et al.}~\cite{Furon2002}, who have translated Kerckhoffs' principle (Alice and Bob shall only rely on some previously shared secret for privacy), from cryptography to data hiding. 

More recently~\cite{Cayre2005,Perez06} classified the information hiding  
attacks into categories, according to the type of information the attacker (Eve)
has access to:
\begin{itemize}
\item in Watermarked Only Attack (WOA) she only knows embedded contents $z$;
\item in Known Message Attack (KMA) she knows pairs $(z,y)$ of embedded
  contents and corresponding messages;
\item in Known Original Attack (KOA) she knows several pairs $(z,x)$ 
  of embedded contents and their corresponding original versions;
\item in Constant-Message Attack (CMA) she observes several embedded
  contents $z^1$,\ldots,$z^k$ and only knows that the unknown 
  hidden message $y$ is the same in all contents.
\end{itemize}

To the best of our knowledge, 
KMA, KOA, and CMA have not already been studied
due to the lack of theoretical framework.
In the opposite, security of data hiding against WOA can be evaluated,
by using a probabilistic approach recalled below.

\subsection{Stego-security}\label{sub:stegosecurity}

In the Simmons' prisoner problem~\cite{Simmons83}, Alice and Bob are in jail and
they want to,  possibly, devise an escape plan by  exchanging hidden messages in
innocent-looking  cover contents.  These  messages  are to  be  conveyed to  one
another by a common warden named Eve, who eavesdrops all contents and can choose
to interrupt the communication if they appear to be stego-contents.

Stego-security,  defined in  this well-known  context, is  the  highest security
class in Watermark-Only  Attack setup, which occurs when Eve  has only access to
several marked contents~\cite{Cayre2008}.

Let $\mathds{K}$ be the set of embedding keys, $p(X)$ the probabilistic model of
$N_0$ initial  host contents,  and $p(Y|K)$ the  probabilistic model  of $N_0$
marked contents s.t. each host  content has  been marked
with the same key $K$ and the same embedding function.

\begin{definition}[Stego-Security~\cite{Cayre2008}]
\label{Def:Stego-security}  The embedding  function  is \emph{stego-secure}
if  $\forall K \in \mathds{K}, p(Y|K)=p(X)$ is established.
\end{definition}



 Stego-security  states that  the knowledge  of  $K$ does  not help  to make  the
 difference  between $p(X)$ and  $p(Y)$.  This  definition implies  the following
 property:
 $$p(Y|K_1)= \cdots = p(Y|K_{N_k})=p(Y)=p(X)$$ 
 This property is equivalent to  a zero Kullback-Leibler divergence, which is the
 accepted definition of the "perfect secrecy" in steganography~\cite{Cachin2004}.

\subsection{The negation mode is stego-secure}
To make this article self-contained, this section recalls theorems and proofs of stego-security for negation mode published in~\cite{gfb10:ip}.

\begin{proposition} \emph{dhCI dissimulation}  of Definition \ref{def:dhCI} with
negation mode and  CIIS strategy-adapter is stego-secure, whereas  it is not the
case when using CIDS strategy-adapter.
\end{proposition}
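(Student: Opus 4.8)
The plan is to prove the two claims separately by reasoning about the induced probability distribution on the LSC vector $\phi_m \in \Bool^l$ after embedding.

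\paragraph{Stego-security of the negation mode with CIIS.}
The key observation is that stego-security concerns only the distribution of the marked content $Y$ given a fixed key $K$, compared with the distribution $p(X)$ of the host. Following the standard convention in this setting, I would assume $p(X)$ is uniform on $\Bool^l$ (each LSC bit is equidistributed and independent). The embedding replaces $\phi_m$ by $\hat{y}$, the right member of $G_{f_l}^q(S_y,\phi_m)$. With the negation mode, each application of $F_{f_l}(s^t,\cdot)$ flips exactly one coordinate of the current configuration, so after $q$ iterations the output is obtained from the original $\phi_m$ by flipping, for each coordinate $i$, a bit determined by the parity of the number of times $i$ appears among $s^0,\dots,s^{q-1}$. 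The crucial point to establish is that, for a fixed key $K$, the map $\phi_m \mapsto \hat{y}$ is a bijection on $\Bool^l$: indeed the strategy $S_y$ produced by the CIIS adapter depends on $(K,y,\alpha,l)$ but \emph{not} on the host $\phi_m$ itself, so the sequence of flipped positions is the same for every host. Hence $\hat{y} = \phi_m \oplus c$ for a constant vector $c \in \Bool^l$ depending only on the key and the message. Since translation by a constant is a bijection of $\Bool^l$ that preserves the uniform distribution, $p(Y|K) = p(X)$, which is exactly stego-security.

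\paragraph{Failure with CIDS.}
For the negative part I would exhibit a host distribution for which CIDS breaks the equality $p(Y|K)=p(X)$. The defining feature of the CIDS adapter is that its strategy \emph{does} depend on the host, through the Boolean sequence $X$: the rule ``if $X^t=1$ then $S^t=t$, else $S^t=1$'' makes the positions modified during embedding a function of the very bits being embedded into. The plan is to trace a small explicit case (say $l=2$ or $l=3$, with a fixed small $q$) and compute the resulting distribution of $\hat{y}$ starting from a uniform $\phi_m$. Because the choice of which coordinate is negated is now correlated with the content, the induced map $\phi_m \mapsto \hat{y}$ is no longer a bijection: distinct inputs can collapse to the same output, so some configurations in $\Bool^l$ receive zero probability while others are over-represented. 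This yields $p(Y|K) \neq p(X)$, refuting stego-security.

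\paragraph{Main obstacle.}
The genuinely load-bearing step is the bijectivity argument in the CIIS case, and the heart of it is the host-independence of the CIIS-generated strategy. I would verify from Definition of the CIIS adapter that $S_y$ is computed purely from $(K,y,\alpha,l)$ with no reference to $\phi_m$, so that the same fixed strategy is applied to every host; only then does the negation dynamics reduce to a single constant translation $\phi_m \mapsto \phi_m \oplus c$. For CIDS the symmetric difficulty is to pin down a clean, small counterexample whose distribution computation is transparent rather than merely asserting non-bijectivity; the care lies in choosing $l$, $q$, and the $X$-dependence so that the collapse is exhibited explicitly. I would also make the uniform-prior assumption on $p(X)$ explicit, since the whole argument is a statement about how the embedding transports that prior.
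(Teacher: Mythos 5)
Your argument for the CIIS half is correct but follows a genuinely different route from the paper's. The paper fixes a target configuration $e$ and proves by induction on $t$ that $X^t \sim \mathbf{U}(\mathbb{B}^n)$, writing $P(X^{t+1}=e)=\sum_{k=1}^n P(X^t=e\oplus\mathbf{B}_k)\,P(S^t=k)$ and using the fact that the events $\{S^t=k\}$ partition the sample space; the independence of $S^t$ from $X^t$ (which holds for CIIS but not CIDS) is the load-bearing step. You instead observe that for the negation mode the entire $q$-step dynamics collapses to a single translation $\phi_m\mapsto\phi_m\oplus c$, with $c$ determined by the parity of occurrences of each index in the strategy, and that a translation preserves the uniform law. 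Your version is more direct (no induction, no partition-of-unity bookkeeping) and makes the host-independence of the strategy visibly the crux; the paper's version is the one that survives generalization, since the induction template extends to modes other than negation (it is essentially the skeleton reused in Theorem~\ref{th:stego} via the Markov matrix), whereas the constant-XOR collapse is special to $\neg_n$. Both correctly require the uniform prior $X\sim\mathbf{U}(\mathbb{B}^l)$, which you rightly make explicit.

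The CIDS half of your proposal, however, is a plan rather than a proof: you assert that the host-dependence of the strategy makes the map non-bijective and that some configurations receive zero probability, but you never exhibit one. This is the one genuine gap. The computation that closes it is short and worth carrying out: under CIDS the position selected at step $t\leqslant l$ is $t$ itself when the corresponding host bit is $1$ and is $1$ otherwise, so for every position $t\geqslant 2$ the bit is flipped exactly once if it started at $1$ (ending at $0$) and never flipped if it started at $0$ (staying at $0$); hence every output has all coordinates from $2$ to $l$ equal to $0$, and in particular $P\bigl(Y=(1,1,\dots,1)\mid K\bigr)=0$. That single unreachable configuration is all the paper needs to refute uniformity, and it is exactly the kind of explicit collapse your sketch gestures at without pinning down.
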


\begin{proof}   On   the    one   hand,   let   us   suppose    that   $X   \sim
\mathbf{U}\left(\mathbb{B}^n\right)$  when  using  \linebreak CIIS$(K,\_,\_,l)$.
We  prove  by  a
mathematical   induction   that   $\forall    t   \in   \mathds{N},   X^t   \sim
\mathbf{U}\left(\mathbb{B}^n\right)$.

The     base     case     is     immediate,     as     $X^0     =     X     \sim
\mathbf{U}\left(\mathbb{B}^n\right)$. Let us now suppose that the statement $X^t
\sim  \mathbf{U}\left(\mathbb{B}^n\right)$  holds  until for  some $t$. 
Let  $e  \in
\mathbb{B}^n$   and   \linebreak   $\mathbf{B}_k=(0,\cdots,0,1,0,\cdots,0)   \in
\mathbb{B}^n$ (the digit $1$ is in position $k$).

So    
$P\left(X^{t+1}=e\right)=\sum_{k=1}^n
P\left(X^t=e\oplus\mathbf{B}_k,S^t=k\right)$ where  
$\oplus$ is again the bitwise exclusive or. 
These  two events are  independent when
using CIIS strategy-adapter 
(contrary to CIDS, CIIS is not built by using $X$),
 thus:
$$P\left(X^{t+1}=e\right)=\sum_{k=1}^n
P\left(X^t=e\oplus\mathbf{B}_k\right) \times  P\left(S^t=k\right).$$ 

According to the
inductive    hypothesis:   $P\left(X^{n+1}=e\right)=\frac{1}{2^n}   \sum_{k=1}^n
P\left(S^t=k\right)$.  The set  of events $\left \{ S^t=k \right  \}$ for $k \in
\llbracket  1;n \rrbracket$  is  a partition  of  the universe  of possible,  so
$\sum_{k=1}^n                  P\left(S^t=k\right)=1$.                  Finally,
$P\left(X^{t+1}=e\right)=\frac{1}{2^n}$,   which    leads   to   $X^{t+1}   \sim
\mathbf{U}\left(\mathbb{B}^n\right)$.   This  result  is  true  for all  $t  \in
\mathds{N}$ and then for $t=l$.

Since $P(Y|K)$ is $P(X^l)$ that is proven to be equal to $P(X)$,
we thus  have established that, 
$$\forall K  \in [0;1], P(Y|K)=P(X^{l})=P(X).$$ 
So   dhCI   dissimulation   with   CIIS
strategy-adapter is stego-secure.

On  the  other  hand,  due  to  the  definition  of  CIDS,  we  have  \linebreak
$P(Y=(1,1,\cdots,1)|K)=0$. 
So   there  is   no  uniform  repartition   for  the stego-contents $Y|K$.
\end{proof}

To sum up, Alice  and Bob can counteract Eve's attacks in  WOA setup, when using
dhCI dissimulation with  CIIS strategy-adapter.  To our best  knowledge, this is
the second time an information hiding scheme has been proven to be stego-secure:
the   former  was   the  spread-spectrum   technique  in   natural  marking
configuration with $\eta$ parameter equal to 1 \cite{Cayre2008}.

\subsection{A new class of $\varepsilon$-stego-secure schemes}

Let us prove that,
\begin{theorem}\label{th:stego}
Let $\epsilon$ be positive,
$l$ be any size of LSCs, 
$X   \sim \mathbf{U}\left(\mathbb{B}^l\right)$,
$f_l$ be an image mode s.t. 
$\Gamma(f_l)$ is strongly connected and 
the Markov matrix associated to $f_l$ 
is doubly stochastic. 
In the instantiated \emph{dhCI dissimulation} algorithm 
with any uniformly distributed (u.d.) strategy-adapter 
that is independent from $X$,  
there exists some positive natural number $q$ s.t.
$|p(X^q)- p(X)| < \epsilon$. 
\end{theorem}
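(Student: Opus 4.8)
The plan is to recognize the sequence of LSC configurations $(X^t)_{t\in\Nats}$ generated by iterating $G_{f_l}$ as a homogeneous Markov chain on the finite state space $\Bool^l$, and then to invoke the classical convergence theory for such chains. First I would use the hypothesis that the strategy-adapter is uniformly distributed and independent from $X$: this makes the index $S^t$ chosen at step $t$ uniform over $\llbracket 1;l\rrbracket$ and independent of the current state $X^t$, so that
\[
P\bigl(X^{t+1}=y \mid X^t=x\bigr)=\tfrac{1}{l}\,\#\{\,i\in\llbracket 1;l\rrbracket : F_{f_l}(i,x)=y\,\}=M_{x,y},
\]
where $M$ is precisely the Markov matrix associated to $f_l$. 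Consequently $p(X^{t+1})=p(X^t)\,M$ and $p(X^q)=p(X^0)\,M^q$, which turns the statement into a question about the powers of $M$.

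Next I would translate the two structural hypotheses into Markov-chain language. The arcs of $\Gamma(f_l)$ are exactly the pairs $(x,y)$ with $M_{x,y}>0$, so strong connectivity of $\Gamma(f_l)$ is irreducibility of $M$; and double stochasticity of $M$ says precisely that the uniform law $\pi_u=(2^{-l},\dots,2^{-l})$ satisfies $\pi_u M=\pi_u$, i.e. the uniform distribution is stationary, which irreducibility then makes unique. Since the hosts are assumed uniform we have $p(X)=\pi_u$, so the target distribution is exactly the stationary law of the chain. The conclusion would then follow from convergence to stationarity: for a finite, irreducible, \emph{aperiodic} chain, Perron--Frobenius gives $M^q\to\mathbf{1}\,\pi_u$ geometrically, hence $p(X^q)=p(X^0)\,M^q\to\pi_u=p(X)$, and picking $q$ past the point where the geometric tail falls below $\epsilon$ yields $|p(X^q)-p(X)|<\epsilon$.

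I expect the main obstacle to be aperiodicity, which is \emph{not} implied by strong connectivity and double stochasticity alone: the negation mode, whose graph is the bipartite hypercube, is irreducible and doubly stochastic yet has period $2$. I would close this gap by exhibiting a self-loop in $\Gamma(f_l)$, namely a configuration $x$ and a coordinate $i$ with $f_i(x)=x_i$, so that $M_{x,x}>0$; a single self-loop in an irreducible chain forces period $1$, giving primitivity and hence the desired geometric convergence. Should a mode genuinely have no such fixed coordinate, I would fall back on the observation that the bound is immediate once $p(X^0)=\pi_u$: invariance of the uniform law gives $p(X^q)=\pi_u=p(X)$ for \emph{every} $q$, so the inequality then holds with left-hand side equal to $0$, the aperiodicity argument being needed only to cover arbitrary non-uniform initial configurations.
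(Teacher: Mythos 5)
Your proposal follows essentially the same route as the paper: both view $(X^t)$ as a homogeneous Markov chain on $\Bool^l$ whose transition matrix is the Markov matrix $M$ of $f_l$, derive $p(X^{t+1})=p(X^t)\,M$ from the uniformity and independence of the strategy, identify the uniform law as the stationary distribution via double stochasticity, and conclude by convergence to stationarity. The one substantive difference is your treatment of aperiodicity, and there you are more careful than the paper. The paper asserts that strong connectivity alone makes $M$ regular, arguing that $M_{ij}^{k_{ij}}>0$ implies $M_{ij}^{m\,k_{ij}}>0$ for every multiple $m$ and then taking a least common multiple; that implication is false in general (the two-state swap matrix already refutes it), and the negation mode is a counterexample to the conclusion itself: $\Gamma(\neg_l)$ is strongly connected and its matrix doubly stochastic, yet every move flips the Hamming-weight parity, so the chain has period $2$ and $M^q$ never converges. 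Your observation that aperiodicity must be supplied separately is therefore a genuine correction rather than a pedantic one, and your fallback --- that since $p(X^0)$ is already the uniform stationary law, $p(X^q)=p(X)$ holds \emph{exactly} for every $q$, so the stated inequality is immediate with no convergence argument at all --- is the cleanest way to establish the theorem exactly as stated (though it also reveals that the statement is weaker than what the convergence machinery is meant to deliver for non-uniform hosts, where your self-loop or primitivity condition would genuinely be needed).
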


\begin{proof}   
Let $\textit{deci}$ be the bijection between $\Bool^{l}$ and 
$\llbracket 0, 2^l-1 \rrbracket$ that associates the decimal value
of any  binary number in $\Bool^{l}$.
The probability $p(X^t) = (p(X^t= e_0),\dots,p(X^t= e_{2^l-1}))$ for $e_j \in \Bool^{l}$ is thus equal to 
$(p(\textit{deci}(X^t)= 0,\dots,p(\textit{deci}(X^t)= 2^l-1))$ further denoted by $\pi^t$.
Let $i \in \llbracket 0, 2^l -1 \rrbracket$, 
the probability $p(\textit{deci}(X^{t+1})= i)$  is 
\[
 \sum\limits^{2^l-1}_{j=0}  
\sum\limits^{l}_{k=1} 
p(\textit{deci}(X^{t}) = j , S^t = k , i =_k j , f_k(j) = i_k ) 
\]
\noindent 
where $ i =_k j $ is true iff the binary representations of 
$i$ and $j$ may only differ for the  $k$-th element,
and where 
$i_k$ abusively denotes, in this proof, the $k$-th element of the binary representation of 
$i$.

Next, due to the proposition's hypotheses on the strategy,
$p(\textit{deci}(X^t) = j , S^t = k , i =_k j, f_k(j) = i_k )$ is equal to  
$\frac{1}{l}.p(\textit{deci}(X^t) = j ,  i =_k j, f_k(j) = i_k)$.
Finally, since $i =_k j$ and $f_k(j) = i_k$ are constant during the 
iterative process  and thus does not depend on $X^t$, we have 
\[
\pi^{t+1}_i = \sum\limits^{2^l-1}_{j=0}
\pi^t_j.\frac{1}{l}  
\sum\limits^{l}_{k=1} 
p(i =_k j, f_k(j) = i_k ).
\]

Since 
$\frac{1}{l}  
\sum\limits^{l}_{k=1} 
p(i =_k j, f_k(j) = i_k ) 
$ is equal to $M_{ji}$ where  $M$ is the Markov matrix associated to
 $f_l$ we thus have
\[
\pi^{t+1}_i = \sum\limits^{2^l-1}_{j=0}
\pi^t_j. M_{ji} \textrm{ and thus }
\pi^{t+1} = \pi^{t} M.
\]


First of all, 
since the graph $\Gamma(f)$ is strongly connected,
then for all vertices $i$ and $j$, a path can
be  found to  reach $j$  from $i$  in at  most $2^l$  steps.  
There  exists thus $k_{ij} \in \llbracket 1,  2^l \rrbracket$ s.t.
${M}_{ij}^{k_{ij}}>0$.  
As all the multiples $l \times k_{ij}$ of $k_{ij}$ are such that 
${M}_{ij}^{l\times  k_{ij}}>0$, 
we can  conclude that, if
$k$ is the least common multiple of $\{k_{ij}  \big/ i,j  \in \llbracket 1,  2^l \rrbracket  \}$ thus 
$\forall i,j  \in \llbracket  1, 2^l \rrbracket,  {M}_{ij}^{k}>0$ and thus 
$M$ is a regular stochastic matrix.

Let us now recall the following stochastic matrix theorem:
\begin{theorem}[Stochastic Matrix]
  If $M$ is a regular stochastic matrix, then $M$ 
  has an unique stationary  probability vector $\pi$. Moreover, 
  if $\pi^0$ is any initial probability vector and 
  $\pi^{t+1} = \pi^t.M $ for $t = 0, 1,\dots$ then the Markov chain $\pi^t$
  converges to $\pi$ as $t$ tends to infinity.
\end{theorem}

Thanks to this theorem, $M$ 
has an unique stationary  probability vector $\pi$. 
By hypothesis, since $M$ is doubly stochastic we have 
$(\frac{1}{2^l},\dots,\frac{1}{2^l}) = (\frac{1}{2^l},\dots,\frac{1}{2^l})M$
and thus $\pi =  (\frac{1}{2^l},\dots,\frac{1}{2^l})$.
Due to the matrix theorem, there exists some 
$q$ s.t. 
$|\pi^q- \pi| < \epsilon$
and the proof is established.
Since $p(Y| K)$ is $p(X^q)$ the method is then $\epsilon$-stego-secure
provided the strategy-adapter is uniformly distributed.
 \end{proof}

This section has focused on security with regards to probabilistic behaviors. 
Next section studies it in the perspective of topological ones.


\section{Chaos-security}\label{sec:chaossecurity}

 To check whether an existing data hiding scheme is chaotic or not, we propose firstly to write it as an iterate process $x^{n+1}=f(x^n)$. It is possible to prove that this formulation can always be done, as follows. Let us consider a given data hiding algorithm. Because it must be computed one day, it is always possible to translate it as a Turing machine, and this last machine can be written as $x^{n+1} = f(x^n)$ in the following way. Let $(w,i,q)$ be the current configuration of the Turing machine (Fig.~\ref{Turing}), where $w=\sharp^{-\omega} w(0) \hdots w(k)\sharp^{\omega}$ is the paper tape, $i$ is the position of the tape head, $q$ is used for the state of the machine, and $\delta$ is its transition function (the notations used here are well-known and widely used). We define $f$ by:
 \begin{itemize}
 \item $f(w(0) \hdots w(k),i,q) = ( w(0) \hdots w(i-1)aw(i+1)w(k),i+1,q')$, if  $\delta(q,w(i)) = (q',a,\rightarrow)$;
 \item $f( w(0) \hdots w(k),i,q) = (w(0) \hdots w(i-1)aw(i+1)w(k),i-1,q')$,  if $\delta(q,w(i)) = (q',a,\leftarrow)$.
 \end{itemize}
 Thus the Turing machine can be written as an iterate function $x^{n+1}=f(x^n)$ on a well-defined set $\mathcal{X}$, with $x^0$ as the initial configuration of the machine. We denote by $\mathcal{T}(S)$ the iterative process of a data hiding scheme $S$.

 \begin{figure}[h!]
   \centering
 \includegraphics[width=8.5cm]{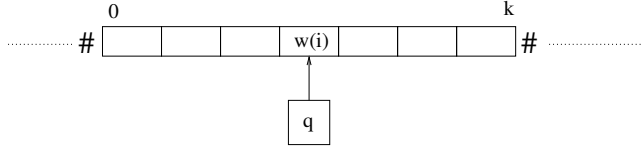}
 \caption{Turing Machine}
 \label{Turing}
 \end{figure}

Let us now define the notion of chaos-security.
Let $\tau$ be a topology on $\mathcal{X}$. So the behavior of this dynamical system can be studied to know whether or not the data hiding scheme is $\tau-$unpredictable. This leads to the following definition.

\begin{definition}
\label{DefinitionChaosSecure}
An information hiding scheme $S$ is said to be chaos-secure on $(\mathcal{X},\tau)$ if its iterative process $\mathcal{T}(S)$ has a chaotic behavior, as defined by Devaney, on this topological space.
\end{definition}

Theoretically speaking, chaos-security can always be studied, as it only requires that the two following points are satisfied.
\begin{itemize}
\item Firstly, the data hiding scheme must be written as an iterate function on
  a set $\mathcal{X}$;
As illustrated by the use of the Turing machine, it is always possible to satisfy this requirement; It is established here since we iterate $G_f$ as defined in
Sect.~(\ref{sub:bdds});

\item Secondly, a metric or a topology must be defined on $\mathcal{X}$; This is always possible, for example, by taking for instance the most relevant one, that is the order topology.
\end{itemize}


Guyeux has recently shown in~\cite{GuyeuxThese10}  that chaotic
iterations of $G_f$ with  the
vectorial negation  as  iterate  function 
have   a  chaotic  behavior.
As a corollary, we deduce that the dhCI  dissimulation algorithm 
with negation mode and  CIIS strategy-adapter is chaos-secure.

However, all these results suffer from only relying on the vectorial
negation function. This problem has been theoretically tackled 
in~\cite{GuyeuxThese10} which provides the  following theorem.

\begin{theorem}
\label{Th:Caracterisation des  IC chaotiques} Functions $f  : \mathds{B}^{n} \to
\mathds{B}^{n}$ such that  $G_f$ is chaotic according to  Devaney, are functions
such that the graph $\Gamma(f)$ is strongly connected.
\end{theorem}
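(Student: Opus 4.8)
The plan is to obtain strong connectivity of $\Gamma(f)$ from the \emph{topological transitivity} that is built into Devaney's definition of chaos; I will not need the regularity (dense periodic points) or the sensitivity condition. The bridge between the two notions is the observation that the Boolean component of an orbit of $G_f$ traces out exactly a directed walk in $\Gamma(f)$. Writing $\pi_2:\mathcal{X}\to\Bool^n$ for the projection onto the Boolean part, one has $\pi_2(G_f(s,x))=F_f(s^0,x)$, and by definition $\Gamma(f)$ carries an arc from any $w$ to $F_f(i,w)$. Hence $\pi_2(G_f^k(s,x))$ is the endpoint of the length-$k$ walk in $\Gamma(f)$ that starts at $x$ and follows the arcs prescribed by $s^0,\dots,s^{k-1}$.

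First I would record the property of the distance $d$ on $\mathcal{X}$ constructed in \cite{GuyeuxThese10} that the argument relies on: its integer part measures the number of differing Boolean coordinates, so that any open ball of radius strictly less than $1$ about a point $(s,x)$ contains only configurations whose Boolean component equals $x$. Fixing two arbitrary distinct vertices $x,y\in\Bool^n$, I would take $U$ to be such a ball about some $(s,x)$ and $V$ such a ball about some $(\check s,y)$; both are nonempty and open, and $U$ (resp. $V$) freezes the Boolean part to $x$ (resp. $y$).

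Next I would invoke the open-set form of topological transitivity applied to the pair $(U,V)$: there exist an integer $k$ and a point $(s',x)\in U$ with $G_f^k(s',x)\in V$. Reading off Boolean components, $\pi_2(s',x)=x$ while $\pi_2(G_f^k(s',x))=y$; unfolding $G_f^k$ as above then yields a directed walk $x\to F_f({s'}^{0},x)\to\dots\to y$ of length $k$ in $\Gamma(f)$. Because $x\neq y$ forces $U\cap V=\emptyset$, the choice $k=0$ is impossible, so $k\geq 1$ and this is a genuine nonempty directed path from $x$ to $y$. As $x$ and $y$ were arbitrary, every vertex reaches every other and $\Gamma(f)$ is strongly connected.

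The only delicate point is the metric step: the whole argument rests on the precise shape of the distance $d$, and specifically on the fact that its integer part coincides with the Hamming distance on $\Bool^n$, so that sub-unit balls freeze the Boolean coordinate. Once this is granted, the remainder is a direct translation of transitivity into reachability in $\Gamma(f)$. The equivalence between the open-set form of transitivity and the existence of a dense orbit (Birkhoff) is available here since $\mathcal{X}$ is a separable complete metric space, but I would prefer to use the open-set formulation directly to avoid any issue with isolated points in degenerate small cases.
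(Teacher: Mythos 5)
The paper itself gives no proof of this theorem: it is imported from~\cite{GuyeuxThese10}, so there is no in-paper argument to compare yours against. Judged on its own terms, the implication you prove is handled correctly. The reduction of an orbit of $G_f$ to a directed walk in $\Gamma(f)$ via $\pi_2(G_f(s,x))=F_f(s^0,x)$ is exactly right; the property of $d$ you isolate (integer part equal to the Hamming distance on the Boolean component, so that balls of radius less than $1$ freeze that component) is indeed the property the metric of~\cite{GuyeuxThese10} has; and you correctly dispose of the degenerate case $k=0$ by observing that $U\cap V=\emptyset$ when $x\neq y$. Using only transitivity, and neither regularity nor sensitivity, is the right economy for this direction.

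The gap is one of scope rather than of execution. The theorem's label (\emph{Caracterisation}) and, more importantly, the sentence the paper draws from it immediately afterwards --- that functions whose graph is strongly connected \emph{suffice} to provide chaos-secure instances of dhCI dissimulation --- show that the intended content is an equivalence, and the half the paper actually relies on is the converse of the one you prove: strong connectivity of $\Gamma(f)$ implies that $G_f$ is chaotic in Devaney's sense. That is the substantive half. One must exhibit a dense set of periodic points (given any finite strategy prefix driving $x$ to some state $x'$, strong connectivity supplies a path in $\Gamma(f)$ from $x'$ back to $x$, whose arc labels extend the prefix to a periodic strategy) and establish transitivity (concatenate onto a given prefix the labels of a path from the reached state to the Boolean component of the target), with sensitivity then following from Banks et al. Your proposal does not touch this direction, so as a proof of the characterization the paper uses it is half complete; as a proof of the inclusion literally stated, it is correct.
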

\noindent We deduce from this theorem that functions whose  
graph is strongly connected are sufficient to provide new instances of 
dhCI dissimulation that are chaos-secure.

Computing a mode $f$ such that the image of $n$ (\textit{i.e.}, $f_n$) 
is a function with a strongly connected graph of iterations $\Gamma(f_n)$
has been previously studied (see~\cite{bcgr11:ip} for instance). 
The next section presents a use of them in our steganography context.

\section{Applications to frequential domains}
\label{sec:applications}
We are then left to provide an u.d. strategy-adapter that is independent
from the cover, an image mode $f_l$ whose iteration
graph $\Gamma(f_l)$ is strongly connected and whose Markov
matrix is doubly stochastic.

First, the $\textit{CIIS}(K,y,\alpha,l)$ strategy adapter (see Section~\ref{sub:wmcoding})
has the required properties:
it does not depend on the cover and the proof that its outputs
are u.d. on $\llbracket 1; l \rrbracket$ 
is left as an exercise for the reader.
In all the experiments parameters $K$ and $\alpha$ are randomly 
chosen in $\rrbracket 0, 1\llbracket$ and $\rrbracket 0, 0.5\llbracket$
respectively.
The number of iteration is set to $4*lm$, where $lm$ is the number of LSCs 
that depends on the domain.  
 
Next,~\cite{bcgr11:ip} has presented an iterative approach to generate image
modes $f_l$ such that $\Gamma(f_l)$ is strongly connected. Among these
maps, it is obvious to check which verifies or not the doubly
stochastic constrain.
For instance, in what follows we consider the mode
$f_l: \Bool^l \rightarrow \Bool^l$ s.t. its $i$th component is
defined by
\begin{equation}
\label{eq:fqq}
{f_l}(x)_i =
\left\{
\begin{array}{l}
\overline{x_i} \textrm{ if $i$ is odd} \\
x_i \oplus x_{i-1} \textrm{ if $i$ is even}
\end{array}
\right.
\end{equation}

Thanks to~\cite[Theorem 2]{bcgr11:ip} we deduce that its iteration graph 
$\Gamma(f_l)$ is strongly connected. 
Next, the Markov chain is stochastic by construction. 

Let us prove that its Markov chain is doubly stochastic by induction on the 
length $l$.
For $l=1$ and $l=2$ the proof is obvious. Let us consider that the 
result is established until $l=2k$ for some $k \in \Nats$.

Let us then firstly prove the doubly stochasticity for $l=2k+1$.
Following notations introduced in~\cite{bcgr11:ip}, 
let  $\Gamma(f_{2k+1})^0$ and $\Gamma(f_{2k+1})^1$ denote
the subgraphs of $\Gamma(f_{2k+1})$ induced by the subset $\Bool^{2k} \times\{0\}$
and $\Bool^{2k} \times\{1\}$ of $\Bool^{2k+1}$ respectively.
$\Gamma(f_{2k+1})^0$ and   $\Gamma(f_{2k+1})^1$ are isomorphic to $\Gamma(f_{2k})$.
Furthermore, these two graphs are linked together only with arcs of the form
$(x_1,\dots,x_{2k},0) \to (x_1,\dots,x_{2k},1)$ and 
$(x_1,\dots,x_{2k},1) \to (x_1,\dots,x_{2k},0)$.
In $\Gamma(f_{2k+1})$ the number of arcs whose extremity is $(x_1,\dots,x_{2k},0)$
is  the same than the number of arcs whose extremity is $(x_1,\dots,x_{2k})$ 
augmented with 1, and similarly for $(x_1,\dots,x_{2k},1)$.
By induction hypothesis, the Markov chain associated to $\Gamma(f_{2k})$ is doubly stochastic. All the vertices $(x_1,\dots,x_{2k})$ have thus the same number of 
ingoing arcs and the proof is established for $l$ is $2k+1$.

Let us then  prove the doubly stochasticity for $l=2k+2$.
The map $f_l$ is defined by 
$f_l(x)= (\overline{x_1},x_2 \oplus x_{1},\dots,\overline{x_{2k+1}},x_{2k+2} \oplus x_{2k+1})$.
With previously defined  notations, let us focus on 
$\Gamma(f_{2k+2})^0$ and   $\Gamma(f_{2k+2})^1$ which are isomorphic to $\Gamma(f_{2k+1})$. 
Among configurations of $\Bool^{2k+2}$, only four suffixes of length 2 can be
obviously observed, namely, $00$, $10$, $11$ and $01$.
Since 
$f_{2k+2}(\dots,0,0)_{2k+2}=0$, $f_{2k+2}(\dots,1,0)_{2k+2}=1$, 
$f_{2k+2}(\dots,1,1)_{2k+2}=0$, and $f_{2k+2}(\dots,0,1)_{2k+2}=1$, the number of 
arcs whose extremity is 
\begin{itemize}
\item $(x_1,\dots,x_{2k},0,0)$
 is the same than the one whose extremity is $(x_1,\dots,x_{2k},0)$ in $\Gamma(f_{2k+1})$ augmented with 1 (loop over configurations $(x_1,\dots,x_{2k},0,0)$);
\item $(x_1,\dots,x_{2k},1,0)$
 is the same than the one whose extremity is $(x_1,\dots,x_{2k},0)$ in $\Gamma(f_{2k+1})$ augmented with 1 (arc from configurations 
$(x_1,\dots,x_{2k},1,1)$ to configurations 
$(x_1,\dots,x_{2k},1,0)$);
\item $(x_1,\dots,x_{2k},0,1)$
 is the same than the one whose extremity is $(x_1,\dots,x_{2k},0)$ in $\Gamma(f_{2k+1})$ augmented with 1 (loop over configurations $(x_1,\dots,x_{2k},0,1)$);
\item $(x_1,\dots,x_{2k},1,1)$
 is the same than the one whose extremity is $(x_1,\dots,x_{2k},1)$ in $\Gamma(f_{2k+1})$ augmented with 1 (arc from configurations 
$(x_1,\dots,x_{2k},1,0)$ to configurations 
$(x_1,\dots,x_{2k},1,1)$).
\end{itemize}
Thus all the vertices $(x_1,\dots,x_{2k})$ have  the same number of 
ingoing arcs and the proof is established for $l=2k+2$.

\subsection{DWT embedding}

Let us now explain how the dhCI dissimulation can be applied in
the discrete wavelets domain (DWT).
In this paper, the Daubechies family of wavelets is chosen: 
each DWT decomposition depends on a decomposition level and a coefficient
matrix (Figure~\ref{fig:DWTs}): $\textit{LL}$ means approximation coefficient,
when $\textit{HH},\textit{LH},\textit{HL}$ denote respectively diagonal, 
vertical, and horizontal detail coefficients. 
For example, the DWT coefficient \textit{HH}2 is the matrix equal to the 
diagonal detail coefficient of the second level of decomposition of the image.

\begin{figure}[htb]
\centerline{
\includegraphics[width=7cm]{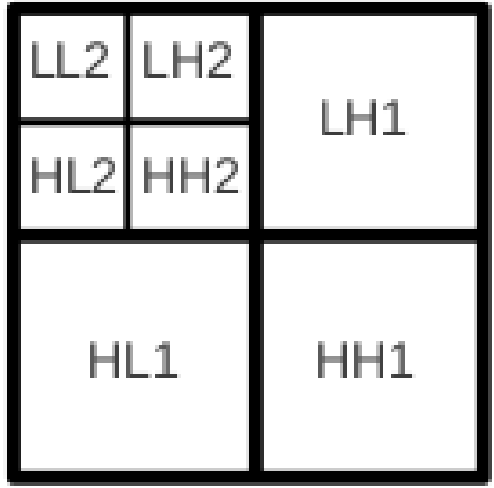}
}
\caption{Wavelets coefficients.}
\label{fig:DWTs}
\end{figure}


The choice of the detail level is motivated by finding
a good compromise between robustness and invisibility.
Choosing low or high frequencies in DWT domain leads either to a very
fragile watermarking without robustness (especially when facing a
JPEG2000 compression attack) or to a large degradation of the host
content. 
In order to have a robust but discrete DWT embedding, 
the second detail level 
(\textit{i.e.}, $\textit{LH}2,\textit{HL}2,\textit{HH}2$) 
that corresponds to the middle frequencies,
has been retained.


Let us consider the Daubechies wavelet coefficients of a third
level decomposition as represented in Figure~\ref{fig:DWTs}. 
We then translate these float coefficients into their 32-bits values.
Let us define the  significance function $u$ that associates to any index $k$ in this sequence of bits the following numbers:
\begin{itemize}
\item $u^k = -1$ if $k$ is one of the three last bits of any index of
  coefficients in  $\textit{LH}2$, $\textit{HL}2$, or in $\textit{HH}2$;
\item $u^k = 0$ if $k$ is an index of a coefficient in  
  $\textit{LH}1$, $\textit{HL}1$, or in $\textit{HH}1$;
\item $u^k = 1$ otherwise.
\end{itemize}

According to the definition of significance of coefficients 
(Def.~\ref{def:msc,lsc}), if $(m,M)$ is $(-0.5,0.5)$,  LSCs are the
last three bits of coefficients in 
$\textit{HL}2$,$\textit{HH}2$, and $\textit{LH}2$.
Thus, decomposition and recomposition functions are fully defined
and dhCI dissimulation scheme can now be applied.

Figure \ref{fig:DWT} shows the result of a
dhCI dissimulation embedding into DWT domain. 
The original is the image 5007 of the BOSS contest~\cite{Boss10}.
Watermark $y$ is given in Fig.~\ref{(b) Watermark}.

From a random selection of 50 images into the database from the BOSS 
contest~\cite{Boss10}, we have applied the previous algorithm with mode $f_l$ 
defined in Equation~(\ref{eq:fqq}) and with the negation mode. 

\begin{figure}[ht]
  \centering
  \subfigure[Original Image.]{\includegraphics[width=0.24\textwidth]
    {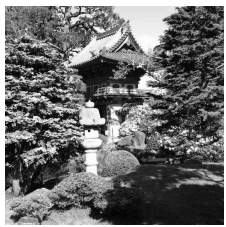}\label{(a) Original 5007}}\hspace{1cm}

  \subfigure[Watermark $y$.]{\includegraphics[width=0.08\textwidth]{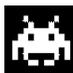}\label{(b) Watermark}}\hspace{1cm}

  \subfigure[Watermarked Image.]{\includegraphics[width=0.24\textwidth]{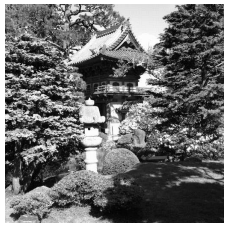}\label{(c) Watermarked 5007}}

\caption{Data hiding in DWT domain}
\label{fig:DWT}
\end{figure}



\subsection{DCT embedding}
Let us denote by $x$ the original image of size $H \times L$, and by $y$
the hidden message, supposed here to be a binary image of size $H' \times L'$. %
The image $x$ is transformed from the spatial
domain to DCT domain frequency bands,
in order to embed $y$ inside it.  
To do so, the host image is firstly divided into $8 \times 8$
image blocks as given below:
$$x = \bigcup_{k=1}^{H/8} \bigcup_{k'=1}^{L/8} x(k,k').$$
Thus, for each image block,
a DCT is performed and the coefficients in the frequency bands 
are obtained as follows:
$x_{DCT}(m;n) = DCT(x(m;n))$.

To define a discrete but robust scheme, only the four following coefficients of each $8 \times 8$ block in position $(m,n)$ will be possibly modified: $x_{DCT}(m;n)_{(3,1)},$ $x_{DCT}(m;n)_{(2,2)},$ or $x_{DCT}(m;n)_{(1,3)}$. 
This choice can be reformulated as follows. 
Coefficients of each DCT matrix are re-indexed by using a southwest/northeast diagonal, such that $i_{DCT}(m,n)_1 = x_{DCT}(m;n)_{(1,1)}$,\linebreak $i_{DCT}(m,n)_2 = x_{DCT}(m;n)_{(2,1)}$, $i_{DCT}(m,n)_3 = x_{DCT}(m;n)_{(1,2)}$, $i_{DCT}(m,n)_4 = x_{DCT}(m;n)_{(3,1)}$, ..., and $i_{DCT}(m,n)_{64} =$  $ x_{DCT}(m;n)_{(8,8)}$.
So the signification function can be defined in this context by:
\begin{itemize}
\item if $k$ mod $64 \in \{1,2,3\}$ and $k\leqslant H\times L$, then $u^k=1$;
\item else if $k$ mod $64 \in \{4, 5, 6\}$ and $k\leqslant H\times L$, then $u^k=-1$;
\item else $u^k = 0$.
\end{itemize}
The significance of coefficients are obtained for instance with 
$(m,M)=(-0.5,0.5)$ leading to the definitions of MSCs, LSCs, and passive coefficients.
Thus, decomposition and recomposition functions are fully defined and dhCI dissimulation scheme can now be applied.

\subsection{Image quality}
This section focuses on measuring visual quality of our steganographic method.
Traditionally, this is achieved by quantifying the similarity 
between the modified image and its reference image.
The Mean Squared Error (MSE) and the Peak Signal to Noise
Ratio (PSNR) are the most widely known tools that provide such a metric.
However, both of them do not take into account Human Visual System (HVS)
properties. 
Recent works~\cite{EAPLBC06,SheikhB06,PSECAL07,MB10} have tackled this problem 
by creating new metrics. Among them, what follows focuses on PSNR-HVS-M~\cite{PSECAL07} and BIQI~\cite{MB10}, considered as advanced visual quality metrics.  
The former efficiently combines PSNR and  visual between-coefficient contrast masking of DCT basis functions based on HVS. This metric has 
been computed here by using the implementation given at~\cite{psnrhvsm11}.
The latter allows to get a blind image quality assessment measure, 
\textit{i.e.}, without any knowledge of the source distortion.
Its implementation is available at~\cite{biqi11}.

\begin{table}
\begin{center}
\begin{tabular}{|c|c|c|c|c|}
\hline
Embedding & \multicolumn{2}{|c|}{DWT} 
 & \multicolumn{2}{|c|}{DCT} \\
\hline
Mode & $f_l$ & neg. & $f_l$ & neg. \\
\hline
PSNR & 42.74     & 42.76     &  52.68      &  52.41   \\
\hline
PSNR-HVS-M & 44.28  & 43.97 & 45.30 & 44.93 \\
\hline
BIQI & 35.35 & 32.78 & 41.59 & 47.47 \\
\hline
\end{tabular}
\end{center}
\caption{Quality measeures of our steganography approach\label{table:quality}} 
\end{table}

Results of the image quality metrics 
are summarized into the Table~\ref{table:quality}.
In wavelet domain, the PSNR values obtained here are comparable to other approaches
(for instance, PSNR are 44.2 in~\cite{TCL05} and 46.5 in~\cite{DA10}), 
but  a real improvement for the discrete cosine embeddings is obtained 
(PSNR is 45.17 for~\cite{CFS08}, it is always lower than 48 for~\cite{Mohanty:2008:IWB:1413862.1413865}, and always lower than 39 for~\cite{MK08}).
Among steganography approaches that evaluate PSNR-HVS-M, results of our approach 
are convincing. Firstly, optimized method developed along~\cite{Randall11} has a PSNR-HVS-M equal to 44.5 whereas our approach, with a similar PSNR-HVS-M, should be easily improved by considering optimized mode. Next, 
another approach~\cite{Muzzarelli:2010} have higher PSNR-HVS-M, certainly, but
this work does not address robustness evaluation whereas our approach is complete.
Finally, as far as we know, this work is the first one that evaluates the BIQI metric in the steganography context.

With all this material, we are then left to evaluate the robustness of this 
approach.

\subsection{Robustness}
Previous sections have formalized frequential domains embeddings and
has focused on the negation mode and $f_l$ defined in Equ.~(\ref{eq:fqq}).
In the robustness given in this continuation, {dwt}(neg), 
{dwt}(fl), {dct}(neg), {dct}(fl) 
respectively stand for the DWT and DCT embedding 
with the negation mode and with this instantiated mode.
 
For each experiment, a set of 50 images is randomly extracted 
from the database taken from the BOSS contest~\cite{Boss10}. 
Each cover is a $512\times 512$ grayscale digital image and the watermark $y$ 
is given in Fig~\ref{(b) Watermark}. 
Testing the robustness of the approach is achieved by successively applying
on watermarked images attacks like cropping, compression, and geometric 
transformations.
Differences between 
$\hat{y}$ and $\varphi_m(z)$ are 
computed. Behind a given threshold rate, the image is said to be watermarked.  
Finally, discussion on metric quality of the approach is given in 
Sect.~\ref{sub:roc}.

Robustness of the approach is  evaluated by
applying different percentage of cropping: from 1\% to 81\%.
Results are presented in Fig.~\ref{Fig:atck:dec}. Fig.~\ref{Fig:atq:dec:img}
gives the cropped image 
where 36\% of the image is removed.
Fig.~\ref{Fig:atq:dec:curves} presents effects of such an attack.
From this experiment, one can conclude that all embeddings have similar 
behaviors.
All the percentage differences are so far less than 50\% 
(which is the mean random error) and thus robustness is established.

\begin{figure}[ht]
  \centering
  \subfigure[Cropped Image.]{\includegraphics[width=0.24\textwidth]
    {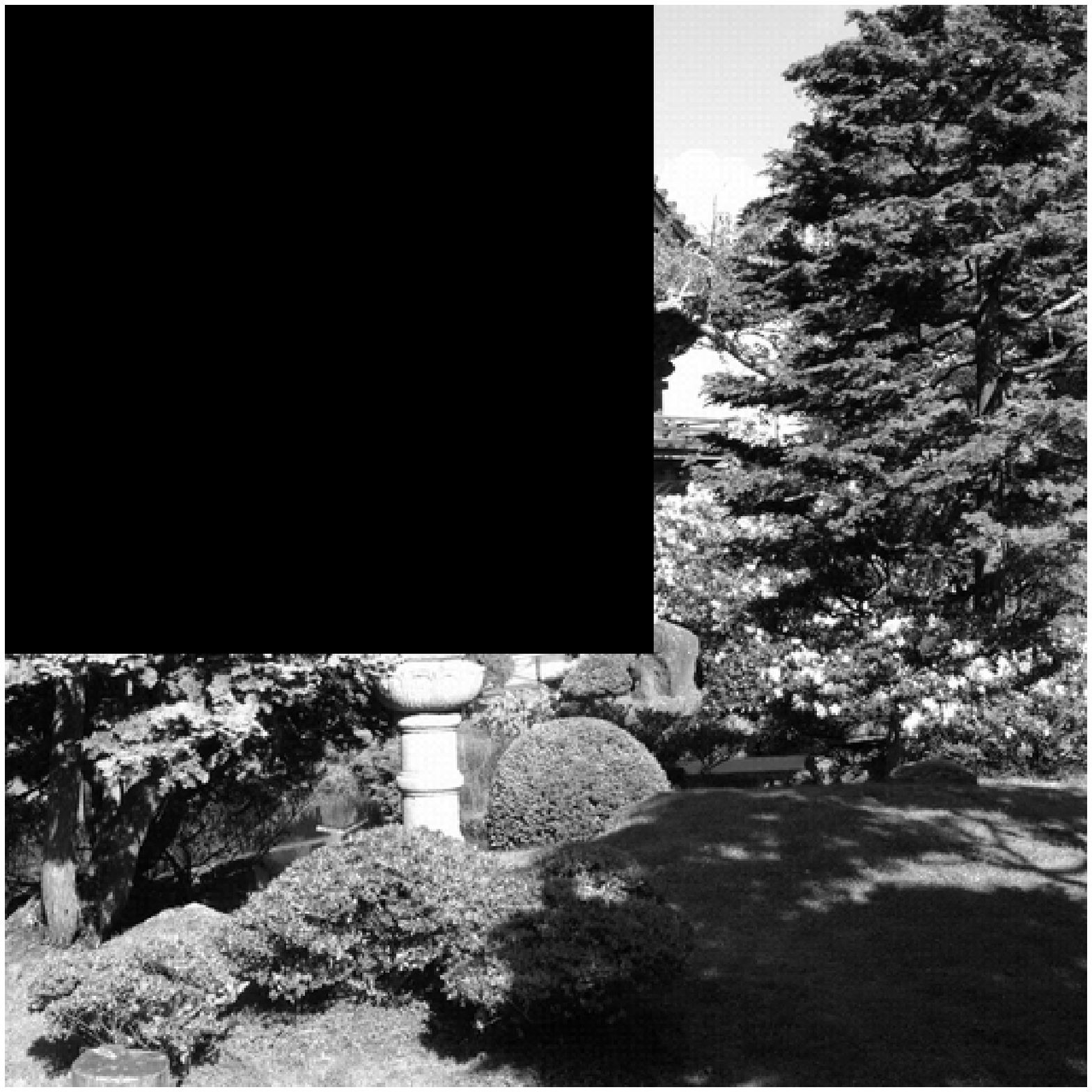}\label{Fig:atq:dec:img}}\hspace{2cm}
  \subfigure[Cropping Effect]{
\includegraphics[width=0.5\textwidth]{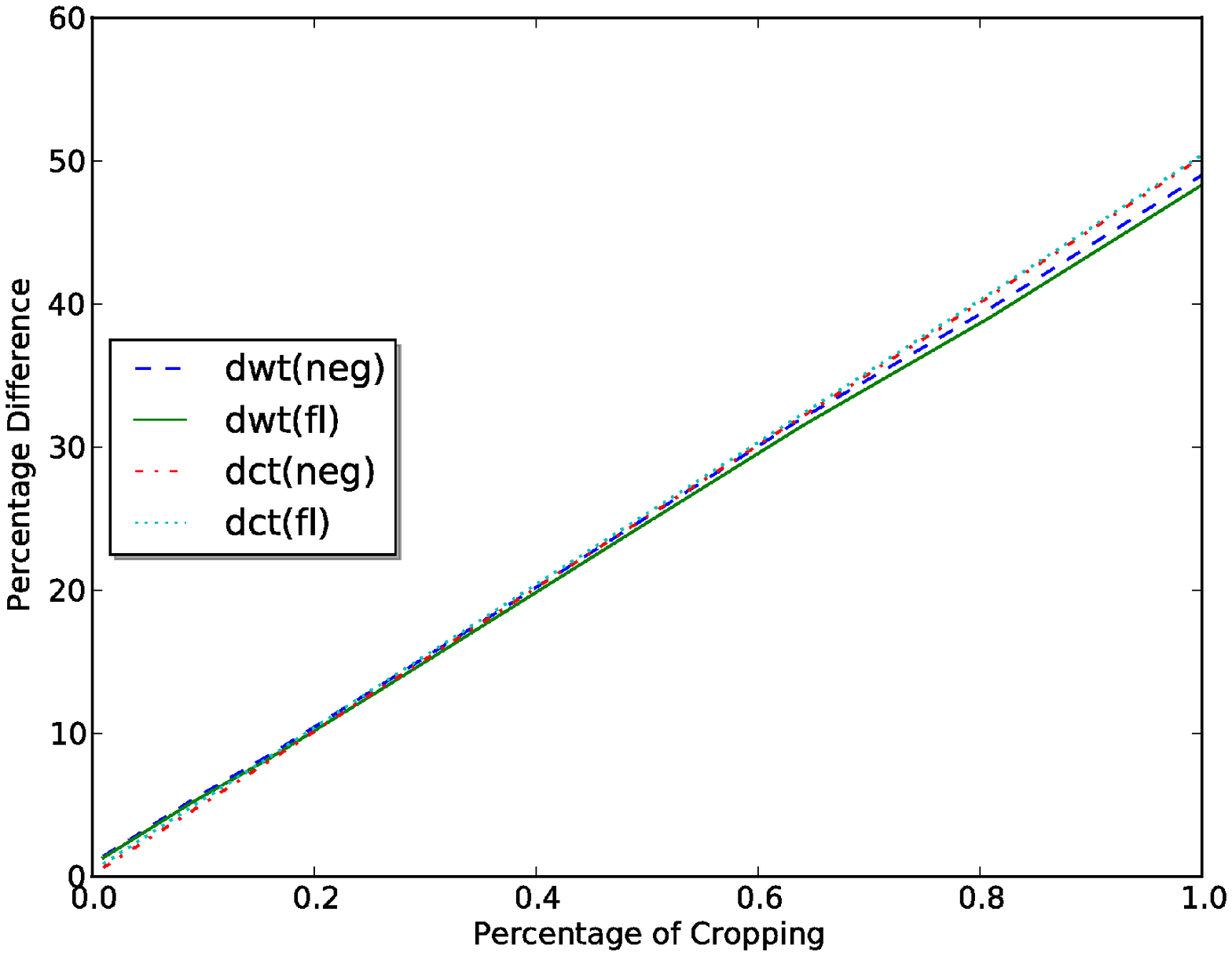}\label{Fig:atq:dec:curves}}
\caption{Cropping Results}
\label{Fig:atck:dec}
\end{figure}

\subsubsection{Robustness against compression}

Robustness against compression is addressed
by studying both JPEG  and JPEG 2000 image compressions.
Results are respectively presented in Fig.~\ref{Fig:atq:jpg:curves}
and Fig.~\ref{Fig:atq:jp2:curves}.
Without surprise, DCT embedding which is based on DCT 
(as JPEG compression algorithm is) is  more 
adapted to JPEG compression than DWT embedding.
Furthermore, we have a similar behavior for the JPEG 2000 compression algorithm, which is based on wavelet encoding: DWT embedding naturally outperforms
DCT one in that case.

\begin{figure}[ht]
  \centering
  \subfigure[JPEG Effect]{
\includegraphics[width=0.45\textwidth]{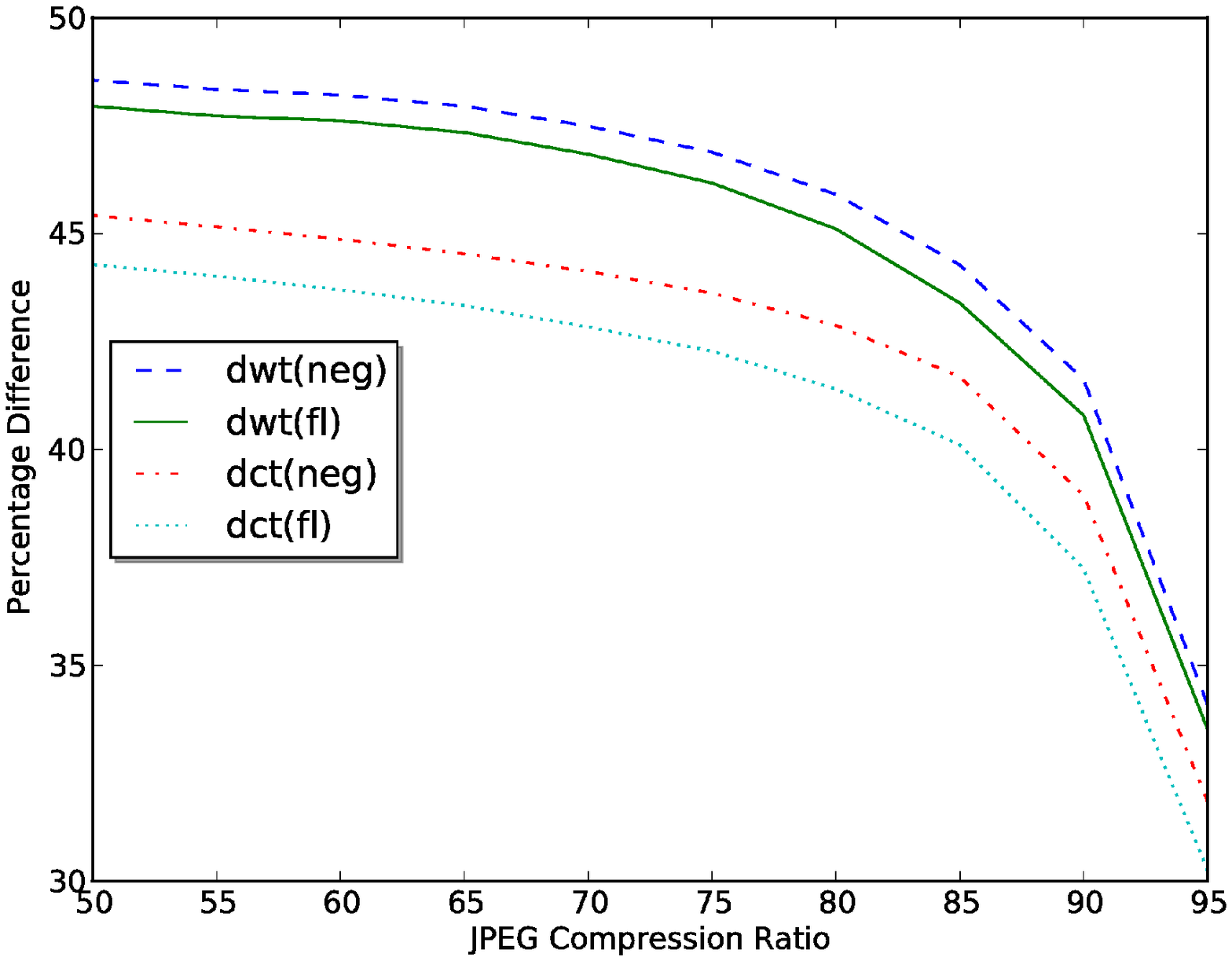}\label{Fig:atq:jpg:curves}}
  \subfigure[JPEG 2000 Effect]{
\includegraphics[width=0.45\textwidth]{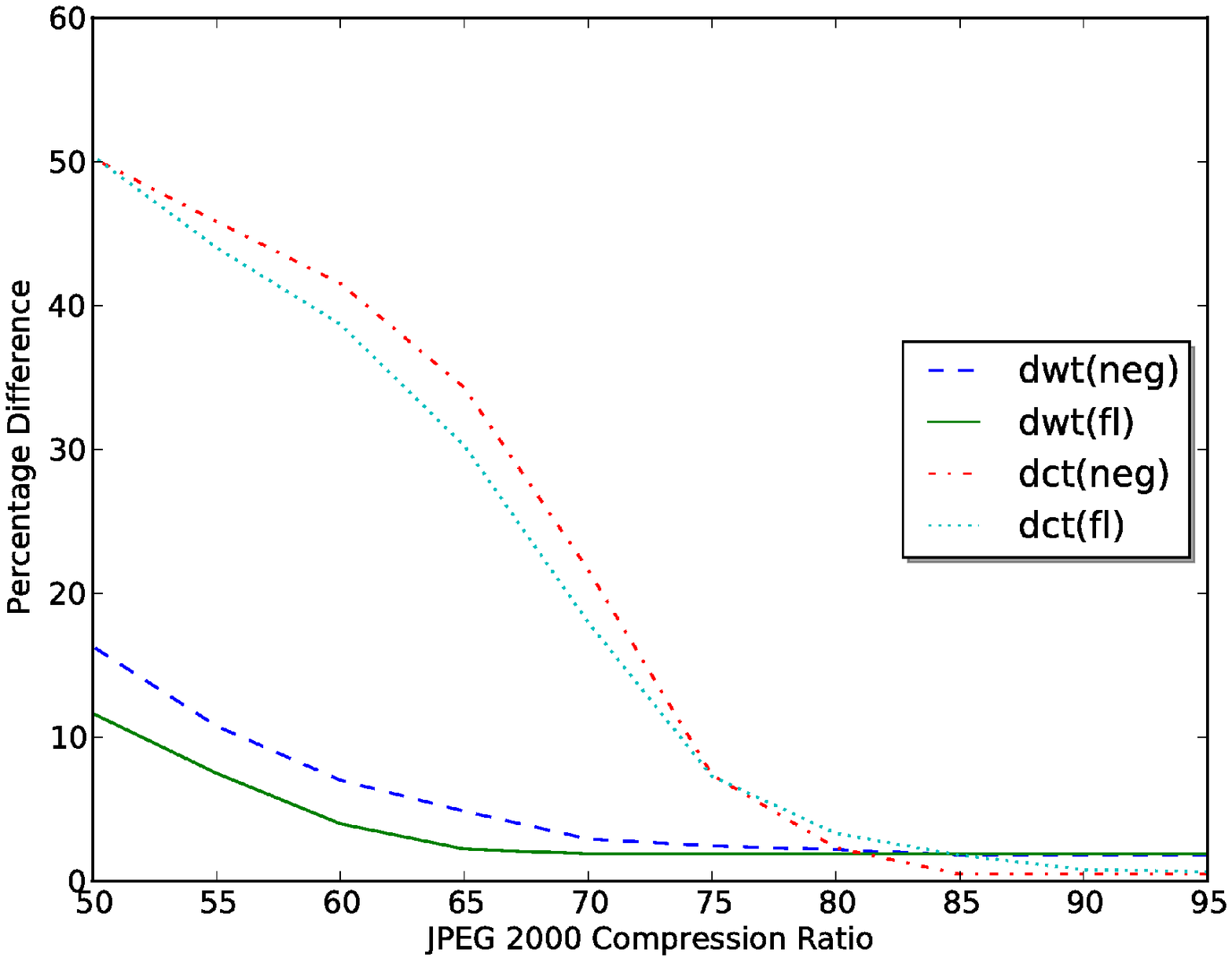}\label{Fig:atq:jp2:curves}}
\caption{Compression Results}
\label{Fig:atck:comp}
\end{figure}

\subsubsection{Robustness against Contrast and Sharpness Attack}
Contrast and Sharpness adjustment belong to the the classical set of 
filtering image attacks.
Results of such attacks are presented in 
Fig.~\ref{Fig:atq:fil} where 
Fig.~\ref{Fig:atq:cont:curve} and Fig.~\ref{Fig:atq:sh:curve} summarize 
effects of contrast and sharpness adjustment respectively.

\begin{figure}[ht]
  \centering
  \subfigure[Contrast Effect]{
\includegraphics[width=0.45\textwidth]{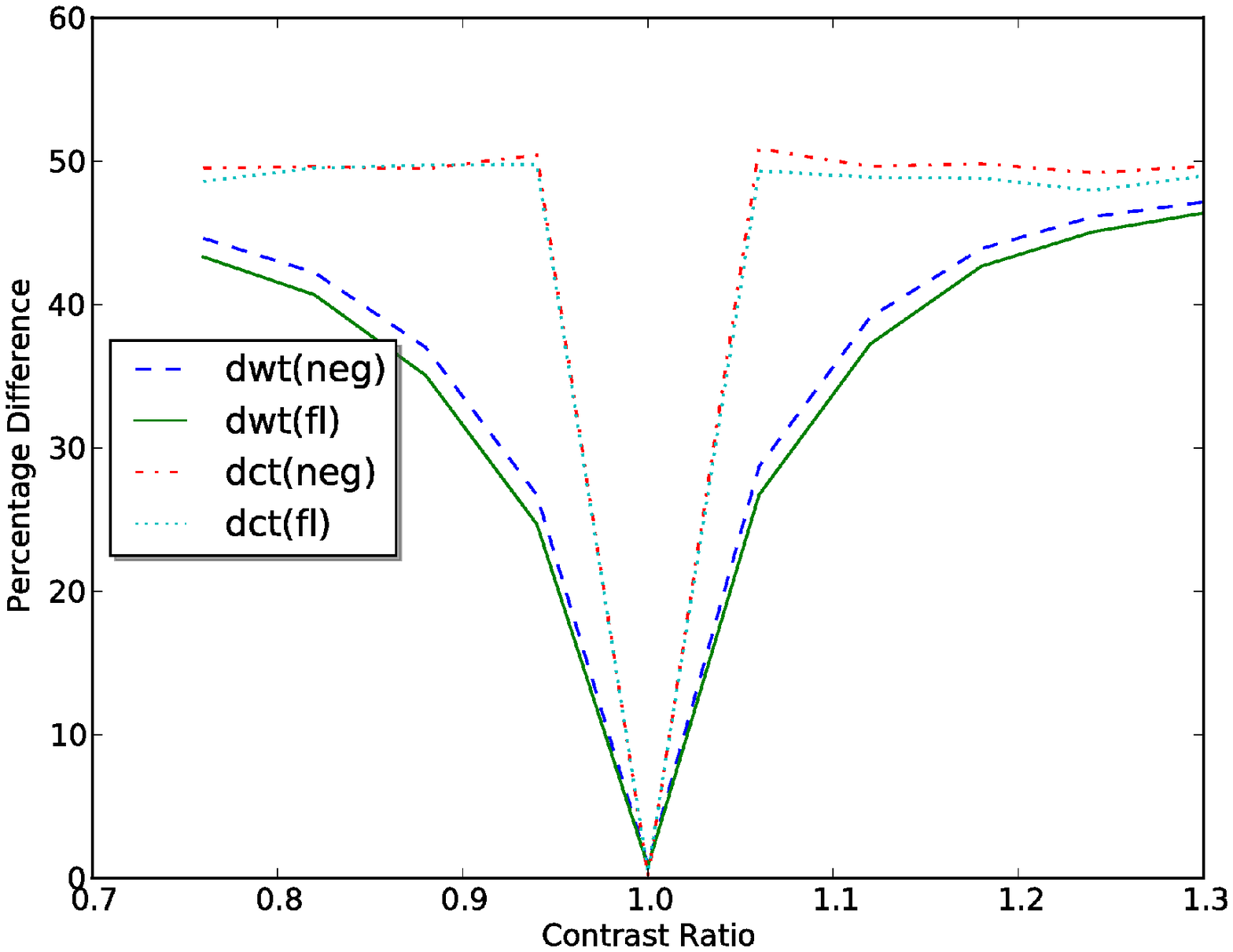}\label{Fig:atq:cont:curve}}
  \subfigure[Sharpness Effect]{
\includegraphics[width=0.45\textwidth]{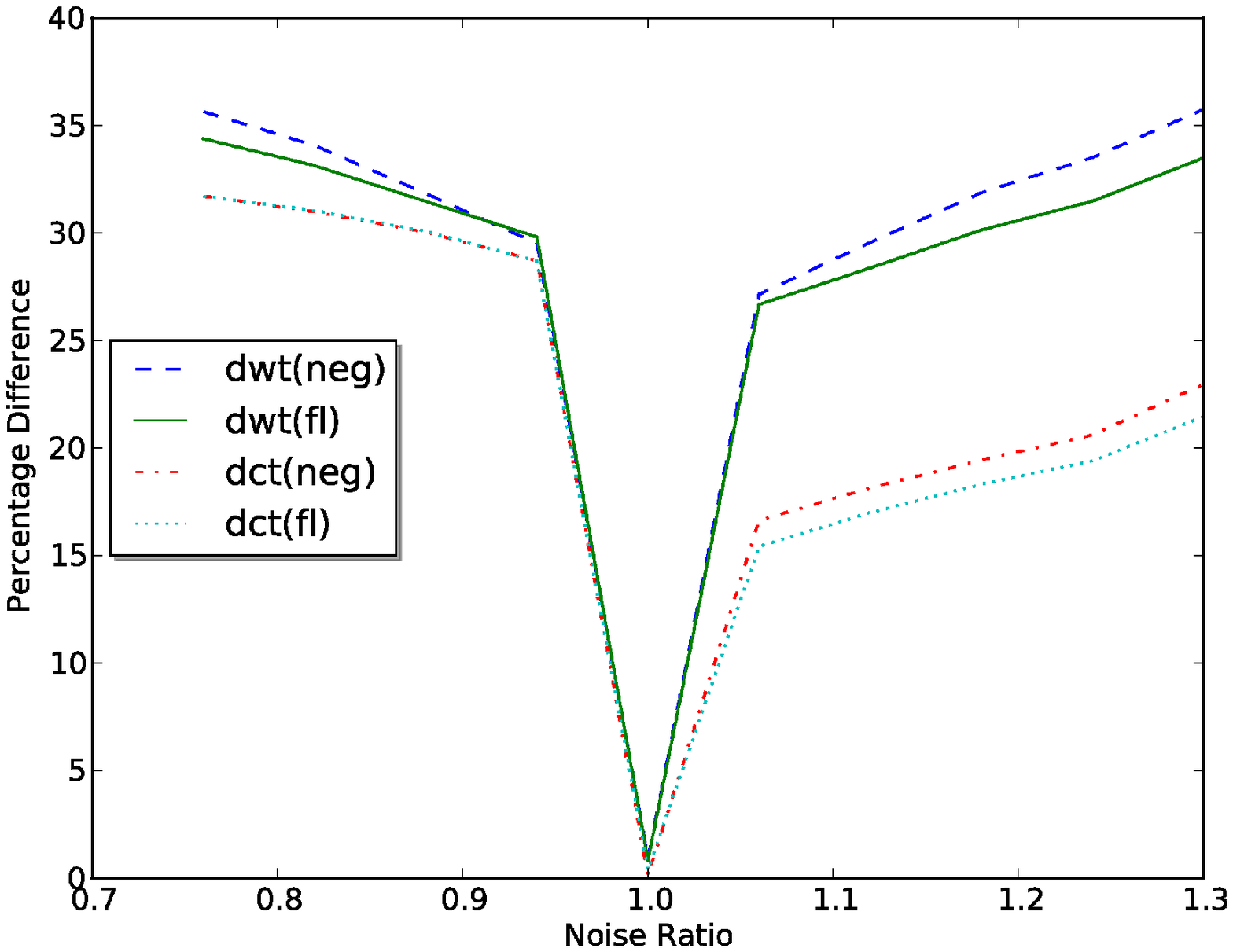}\label{Fig:atq:sh:curve}}
\caption{Filtering Results}
\label{Fig:atq:fil}
\end{figure}

\subsubsection{Robustness against Geometric Transformation}
Among geometric transformations, we focus on  
rotations, \textit{i.e.}, when two opposite rotations 
of angle $\theta$ are successively applied around the center of the image.
In these geometric transformations,  angles range from 2 to 20 
degrees.  
Results are presented in Fig.~\ref{Fig:atq:rot}: Fig.~\ref{Fig:atq:rot:img}
gives the image of a rotation of 20 degrees whereas
Fig.~\ref{Fig:atq:rot:curve} presents effects of such an attack.
It is not a surprise that results are better for DCT embeddings: this approach
is based on cosine as rotation is.

\begin{figure}[ht]
  \centering
  \subfigure[20 degrees Rotation Image]{
    \includegraphics[width=0.25\textwidth]{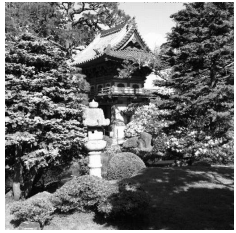}\label{Fig:atq:rot:img}}

  \subfigure[Rotation Effect]{
\includegraphics[width=0.45\textwidth]{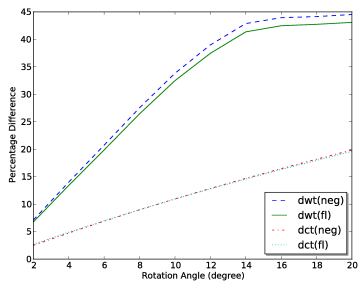}\label{Fig:atq:rot:curve}}

\caption{Rotation Attack Results}
\label{Fig:atq:rot}
\end{figure}

\subsection{Evaluation of the  Embeddings}\label{sub:roc}
We are then left to set a convenient threshold that is accurate to 
determine whether an image is watermarked or not.
Starting from a set of 100 images selected among the Boss image Panel,
we compute the following three sets: 
the one with all the watermarked images $W$,
the one with all successively watermarked and attacked images $\textit{WA}$,
and the one with only the attacked images $A$.
Notice that the 100 attacks for each images 
are selected among these detailed previously.

For each threshold $t \in \llbracket 0,55 \rrbracket$ and a given image 
$x \in \textit{WA} \cup A$, 
differences on DCT are computed. The image is said to be watermarked
if these differences are less than the threshold. 
In the positive case and if $x$ really belongs to 
$\textit{WA}$ it is a True Positive (TP) case.  
In the negative case but if $x$ belongs to 
$\textit{WA}$ it is a False Negative (FN) case.  
In the positive case but if $x$  belongs to 
$\textit{A}$, it is a False Positive (FP) case.  
Finally, in the negative case and if $x$ belongs to
$\textit{A}$, it is a True Negative (TN).  
The True (resp. False) Positive Rate  (TPR) (resp. FPR) is thus computed 
by dividing the number of TP (resp. FP) by 100.

\begin{figure}[ht]
\begin{center}
\includegraphics[width=7cm]{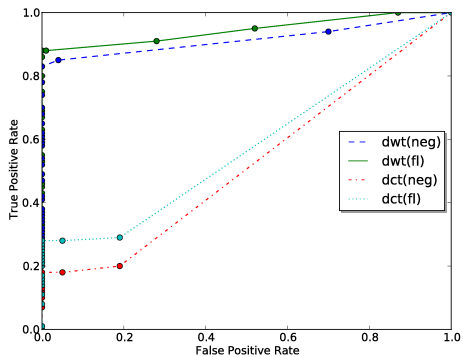}
\end{center}
\caption{ROC Curves for DWT or DCT Embeddings}\label{fig:roc:dwt}
\end{figure}

The Figure~\ref{fig:roc:dwt} is the Receiver Operating Characteristic (ROC) 
curve. 
For the DWT, it shows that best results are obtained when the threshold 
is 45\% for the dedicated function (corresponding to the point (0.01, 0.88))
and  46\% for the negation function (corresponding to the point (0.04, 0.85)).
It allows to conclude that each time LSCs differences between
a watermarked image and another given image  $i'$ are less than 45\%, we can claim that 
$i'$ is an attacked version of the original watermarked content.
For the two DCT embeddings, best results are obtained when the threshold 
is 44\% (corresponding to the points (0.05, 0.18) and (0.05, 0.28)).

Let us then give some confidence intervals for all the evaluated attacks. The
approach is resistant to:
\begin{itemize}
\item all the croppings where percentage is less than 85;
\item compressions where quality ratio is greater 
  than 82 with DWT embedding and 
  where quality ratio is greater than 67 with DCT one;
\item contrast when strengthening belongs to $[0.76,1.2]$ 
(resp. $[0.96,1.05]$)  in DWT (resp. in DCT) embedding;
\item all the rotation attacks with DCT embedding and a rotation where
angle is less than 13 degrees with DWT one.
\end{itemize}

\section{Conclusion}\label{sec:concl}
This paper has proposed a new class of secure and robust information hiding 
algorithms.
It has been entirely formalized, thus allowing both its theoretical security 
analysis, and the computation of numerous variants encompassing spatial and 
frequency domain embedding.
After having presented the general algorithm with detail, we have given
conditions for choosing mode and strategy-adapter making the whole
class  stego-secure or $\epsilon$-stego-secure.
To our knowledge, this is the first time such a result has been established.

Applications in frequency domains (namely DWT and DCT domains) have finally be
formalized.
Complete experiments have allowed us 
first to evaluate how invisible is the steganographic method (thanks to the PSNR computation) and next to verify the robustness property against attacks.
Furthermore, the use of ROC curves for DWT embedding have revealed very high rates
between True positive and False positive results.

In future work, our intention is to find the best image mode with respect to  
the combination between  DCT and DWT based steganography
algorithm. Such a combination topic has already been addressed
(\textit{e.g.}, in~\cite{al2007combined}), but never with objectives
we have set.

Additionally, we will try to discover new topological properties for the dhCI
dissimulation schemes.
Consequences of these chaos properties will be drawn in the context of 
information hiding security.
We will especially focus on the links between topological properties and classes
of attacks, such as KOA, KMA, EOA, or CMA.

Moreover, these algorithms will be compared to other existing ones, among other
things by testing whether these algorithms are chaotic or not.
Finally we plan to verify the robustness of our approach 
against statistical steganalysis methods~\cite{GFH06,ChenS08,DongT08,FridrichKHG11a}.

\bibliographystyle{compj}

\bibliography{abbrev2,mabase,biblioand2}

\end{document}